\newtheorem{theorem}{Theorem}
\newcommand{\PP}{{\mathbb P}}
\newcommand{\cC}{\mathcal C}
\begin{document}

\title[Capturing a tree with few characters]{Capturing a phylogenetic tree when the number of character states varies with the number of leaves}
\author{Mike Steel}
\address{Biomathematics Research Centre, University of Canterbury, Christchurch, NZ}
\email{mike.steel@canterbury.ac.nz}

\begin{abstract}
We show that for any two values $\alpha, \beta >0 $ for which $\alpha+\beta>1$ then there is a value $N$ so that for all $n \geq N$ the following holds. For any binary phylogenetic tree $T$ on $n$ leaves there is a set of $\lfloor n^\alpha \rfloor$ characters that capture $T$, and  for which each character takes at most $\lfloor n^\beta \rfloor$ distinct states. Here `capture' means that $T$ is the unique  perfect phylogeny 
for these  characters.  Our short proof of this  combinatorial result is based on the probabilistic method.
\end{abstract} 

\maketitle

Given a function $f: X\rightarrow S$ let $\pi(f)$ denote the partition of $X$ induced by the equivalence relation $x \sim x'$ if
and only if $f(x)=f(x')$.  If $|\pi(f)| \leq r$ we say that $f$ {\em takes at most} $r$ states (this is equivalent to saying $|f(X)| \leq r$, and such characters are also referred to as `$r$-state characters' elsewhere).
Given a  (unrooted) phylogenetic $X$--tree $T$ (i.e. a tree leaf set $X$ and no vertices of degree 2) $f: X \rightarrow S$ is said to be 
a {\em character} on $X$ and $f$ is {\em convex} on $T$ if the minimal subtrees of $T$ connecting the leaves of each block of the partition $\pi(f)$
are vertex disjoint. The condition of $f$ being convex has a natural interpretation in biology of the character $f$ being
`homoplasy-free' (for details, see \cite{sem}). Now suppose we are given a set $\cC$ of characters on $X$. In this case $T$ is said to be a
{\em perfect phylogeny} for $\cC$ if each of the characters in $\cC$ are convex on $T$.  Moreover, $\cC$ is said to {\em capture} $T$ if
$T$ is the only perfect phylogeny for $\cC$, in which case every non-leaf vertex of $T$ must have degree 3.

Suppose that $\cC$ is a set of $k$ characters, each of which takes at most $r$ states. Then a fundamental inequality states that $k$ must be at least $\lceil (n-3)/(r-1)\rceil $ (Proposition 4.2 of \cite{sem}).  Remarkably,  this lower bound was recently shown  \cite{bor15} to be sharp for every fixed value of $r>1$,  provided that $n \geq N_r$, where $N_r$ is some (increasing)  function of  $r$ (e.g. $N_2=3, N_3=13$ \cite{bor15}).  In other words, for every $r>1$, and every unrooted binary phylogenetic $X$--tree $T$, where $n=|X|\geq N_r$,
  there is a set $\cC$ of $\lceil (n-3)/(r-1)\rceil $ characters  that captures $T$ and with each character in $\cC$ taking at most $r$ states.

In this note, we consider how small $k$ can be  when $r$ is allowed to depend on $n$ (we write $r=r_n$).   From \cite{4char1, 4char2} it is  known that there exists a set $\cC$ of $k=4$ characters for which the associated number $n_r$ of states satisfies  $n/r_n= O(1)$. Thus we focus on the setting where both $r_n$ and $n/r_n$ grow with increasing $n$. More precisely, suppose that we want a set $\cC_n$ consisting of $k_n = \lfloor n^\alpha \rfloor$ characters on $X$, each taking at most $r_n = \lfloor n^\beta \rfloor$ states, to capture some phylogenetic $X$--tree, where $\alpha, \beta >0$.  
   Notice that the inequality
 $k \geq \lceil (n-3)/(r-1)\rceil$ implies that $k_n$ must exceed $n^{1-\beta}$ 
for $n$ sufficiently large, thus $\alpha + \beta >1$.  We show here that any value of $\alpha, \beta >0$ with $\alpha+ \beta>1$ allows for such a
set $\cC_n$ and for any binary tree $T$. 

The following result is independent of the result from the main theorem of \cite{bor15} mentioned above, in the sense that neither result directly implies the other. Our short proof involves a simple application of the probabilistic method, the Chernoff bound,  and a property of the random cluster model on trees established in \cite{mos}.

\begin{theorem}
\label{thm1}
For any two values $\alpha, \beta >0$ for which $\alpha+\beta>1$
 there is a value $N$ so that for all $n \geq N$ the following holds. For any unrooted binary  phylogenetic tree $T$ on a leaf set $X$ of size $n$ there is a set $\cC_n$ of  $k_n=\lfloor n^\alpha \rfloor $ characters on $X$ that capture $T$, and  for which each character takes a most  $r_n = \lfloor n^\beta\rfloor$ distinct states. 
\end{theorem}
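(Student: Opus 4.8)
The plan is to reduce the notion of \emph{capture} to a purely local condition on the internal edges of $T$, and then to meet that condition with high probability using characters drawn from a percolation (random cluster) process on $T$. The starting observation is that a binary tree is determined by the quartets it displays, and that for any character $f$ the relation $f(a)=f(b)\neq f(c)=f(d)$ \emph{forces} the quartet $ab|cd$ in every perfect phylogeny for $f$: restricting $f$ to $\{a,b,c,d\}$ gives a convex two-state character on four leaves, which admits only the topology $ab|cd$, and the restriction of a convex character stays convex. Hence it suffices to produce, for every internal edge $e$ of $T$ whose incident subtrees carry leaf sets $A,B$ at one endpoint and $C,D$ at the other, a character $f\in\cC_n$ and leaves $a\in A,b\in B,c\in C,d\in D$ with $f(a)=f(b)\neq f(c)=f(d)$, chosen so that the resulting family of one forced crossing quartet per internal edge \emph{defines} $T$. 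I would prove this reduction by taking the forced quartets to be \emph{short} (representatives nearest the edge), for which definability of a binary tree is classical.

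Next I would build the characters. Retain each edge of $T$ independently with probability $p=1-q$, colour each resulting connected component with its own distinct state, and restrict to $X$ to obtain a character $f$. Since each block of $\pi(f)$ is the leaf set of a single component, the minimal subtrees lie in vertex-disjoint components, so $f$ is automatically convex on $T$ and the perfect-phylogeny hypothesis comes for free. The basic property of this random cluster model established in \cite{mos} is that two leaves share a block exactly when every edge on the path joining them is retained, so the probability that they co-cluster equals $p$ raised to the distance between them. I would take $k_n=\lfloor n^\alpha\rfloor$ independent copies $f_1,\dots,f_{k_n}$ and set $q=c\,n^{\beta-1}$ for a small constant $c$ (the boundary regimes $\alpha\geq 1$ or $\beta\geq 1$ being handled directly, by split characters respectively by taking $q$ constant).

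Two estimates then finish the argument, and together they explain why $\alpha+\beta>1$ is exactly the right hypothesis. The number of states of a single $f_i$ is at most one plus the number of cut edges, a $\mathrm{Binomial}(2n-3,q)$ variable of mean $\approx 2c\,n^\beta$; by the Chernoff bound it exceeds $r_n=\lfloor n^\beta\rfloor$ with probability $e^{-\Omega(n^\beta)}$, so a union bound over the $k_n$ characters keeps all of them within $r_n$ states with probability tending to $1$. For the forcing, fix an internal edge $e$ and take $a,b,c,d$ to be leaves nearest its two endpoints; as a binary subtree with $\ell$ leaves has a leaf at depth $O(\log\ell)$, these representatives lie within distance $O(\log n)$ of $e$. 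The events ``$e$ is cut'', ``the $a$--$b$ path is retained'' and ``the $c$--$d$ path is retained'' use disjoint edge sets and so are independent, giving a per-edge forcing probability at least $q\cdot p^{O(\log n)}=\Omega(q)$, since $q\log n=n^{\beta-1}\log n\to 0$. Thus $e$ is forced by none of the $k_n$ characters with probability at most $(1-\Omega(q))^{k_n}\leq\exp(-\Omega(n^{\alpha+\beta-1}))$, and summing over the at most $n-3$ internal edges gives $o(1)$ precisely because $\alpha+\beta-1>0$. Combining the two failure probabilities, a suitable $\cC_n$ exists for all large $n$.

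The main obstacle is the combinatorial reduction rather than the probability: I must guarantee that the short crossing quartets the percolation actually forces, one (or a bounded family) per edge, genuinely define $T$, and that the $\Omega(q)$ lower bound on the per-edge forcing probability holds \emph{uniformly} over all edges, including in highly unbalanced regions of $T$ where the nearest representatives sit as far as $\Theta(\log n)$ from the edge. Once these are secured, the matching of exponents, cut probability $n^{\beta-1}$ against $n^\alpha$ characters, is routine and reproduces exactly the threshold $\alpha+\beta>1$.
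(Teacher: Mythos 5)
Your proposal is correct in substance and shares the paper's probabilistic skeleton---generate $k_n=\lfloor n^\alpha\rfloor$ characters by independently cutting edges of $T$ with probability $\Theta(n^{\beta-1})$, control the number of states via the Chernoff bound and a union bound, and conclude existence from a positive success probability---but it takes a genuinely different route at the crucial step. Where the paper simply invokes Lemma 2.2 and Theorem 2.4 of \cite{mos}, which state that $\lceil \frac{1}{B}\log(n^2/\epsilon)\rceil$ random-cluster characters capture $T$ with probability at least $1-\epsilon$ (with $B\sim p_n$), you prove capture from scratch: for each internal edge $e$, with probability $\Omega(q)$ per character the edge $e$ is cut while the two short paths to nearest representatives on either side are retained, which forces the corresponding short quartet in every perfect phylogeny; a union bound over the $n-3$ internal edges then succeeds precisely when $\alpha+\beta>1$; and the short quartets of a binary tree define it. Your route is self-contained and makes the threshold mechanism transparent (it also immediately yields the paper's concluding remark about $\alpha+\beta=1$ at the price of an extra $\log n$ factor of characters), but it requires two ingredients the paper gets for free from \cite{mos}: the uniform $O(\log n)$ bound on the distance from an edge to its nearest representatives (which, as you note, holds in any binary tree), and the definability of $T$ from its short quartets. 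The latter is indeed classical---it underlies the short-quartet method of Erd\H{o}s, Steel, Sz\'ekely and Warnow---but you must use it in the strong form that the \emph{only} phylogenetic tree, binary or not, displaying all short quartets of $T$ is $T$ itself, since a perfect phylogeny need not be binary a priori; the non-binary case is easy (contracting an internal edge of $T$ destroys that edge's own short quartet), but it should be said.

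One repair is needed in a boundary case. Handling $\beta\geq 1$ by ``taking $q$ constant'' does not work as stated: with $q$ constant, $(1-q)^{\Theta(\log n)}$ is polynomially small, so the per-edge forcing probability degrades to $q\,n^{-C(q)}$ and your union bound fails whenever $\alpha$ is small, unless $q$ is chosen as a sufficiently small constant depending on $\alpha$. It is cleaner to observe that the conclusion is monotone in $\beta$ (fewer states is stronger) and simply run your main argument with any exponent $\beta'\in(1-\alpha,1)$ with $0<\beta'<1$ in place of $\beta$; this disposes of all boundary regimes at once and keeps the hypothesis $q\log n\to 0$ that your forcing estimate genuinely uses.
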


\begin{proof}
Consider the following random process performed on $T$.  Each edge of $T$ is independently cut with probability $p_n=r_n/4n$, or is left intact with probability $1-p_n$. This leads to a partition of $X$ corresponding to the equivalence relation that two leaves are related if and only if they lie in the same connected component of the resulting graph.  We will associated to each such partition a character that induces this same partition (e.g. the character $f: X \rightarrow 2^X$  which maps $x$ to its equivalence class under $\sim$). Notice the number of `states' of this associated character is simply the number of blocks of the original partition). 

Let $Y$ denote the random number of edges of $T$ that are cut.  Then $Y$ has a binomial distribution $Y\sim {\rm Bin}(2n-3, p_n)$, which has mean $\mu_n= (2n-3)p_n = (\frac{1}{2}- o(1))n^\beta$.
By a multiplicative form of the `Chernoff bound' in probability theory ({\em c.f.} \cite{hag}, Eqn. (6) with $\epsilon =1$) we have
$\PP(Y \geq  2\mu_n) \leq \exp(-\mu_n/3)$
and since $r_n > 2\mu_n$ we obtain:
\begin{equation}
\label{eq1}
\PP(Y\geq r_n) \leq \exp(-\mu_n/3).
\end{equation}

The number of blocks of the partition of $X$ induced by randomly cutting edges of $T$ in the process described is at most $Y+1$.
Thus, the probability that a character, generated by the random cluster model with $p_n$ value as specified, takes strictly more than $r_n$ states is at most $\PP(Y+1 > r_n) = \PP(Y \geq  r_n) \leq \exp(-\mu_n/3)$, by (\ref{eq1}).

Let us generate a set $\cC_n$ of  $k_n$ such characters independently by the process described  (i.e. constructing partitions of $X$ and for each partition giving an associated character). 
The probability that at least one of these characters has more than $r_n$ states is, by Boole's inequality, at most $$n^\alpha \exp(-\mu_n/3) = n^\alpha \exp\left(-\frac{1}{3} (\frac{1}{2}-o(1))n^\beta\right) \rightarrow 0$$ as $n\rightarrow \infty$ (recall $\beta>0$). 
 Thus, there exists some value $N_1$ for which, for any $n\geq N_1$, , at least one character in $\cC_n$ takes more than $r_n$ states with probability at most  $\nicefrac{1}{3}$

What is the probability that $\cC_n$ captures $T$?   As part of a more general analysis of the (infinite state) random cluster model 
by \cite{mos},  Lemma 2.2 and Theorem 2.4 of that paper show that $\cC_n$  captures $T$ with probability at least $1 - \epsilon$ 
provided that
$k = \lceil \frac{1}{B} \log(n^2/\epsilon)\rceil,$
where $B = p_n(2-\frac{1}{1-p_n})^4 \sim p_n$ (as $n \rightarrow \infty$). 
Now, $$\frac{1}{B} \sim \frac{1}{p_n}=\frac{4n}{r_n} \sim \frac{4n}{n^\beta} = 4n^{1-\beta},$$
and since $\alpha+\beta>1$ it follows that for any $\epsilon>0$:
$$\frac{1}{B} \log(n^2/\epsilon)/n^\alpha  \sim4n^{1-\beta}\log(n^2/\epsilon)/n^\alpha \rightarrow 0 \mbox { as } n\rightarrow \infty.$$ So taking  $\epsilon = \nicefrac{1}{3}$, 
 there is a value $N_2$ for which, for any $n \geq N_2$, we have
$\lceil \frac{1}{B} \log(n^2/\epsilon)\rceil \leq \lfloor n^\alpha \rfloor$ for all $n\geq N_2$.
Thus, with $k_n = \lfloor n^\alpha\rfloor$ where $n\geq N_2$, $\cC_n$ fails to capture $T$
with probability at most $\nicefrac{1}{3}$. 

Combining these two observations, if we set $N= \max\{N_1, N_2\}$ then for all $n \geq N$, the probability that a set $\cC_n$ of $\lfloor n^\alpha \rfloor$  randomly-generated characters satisfies at least one of the following properties:
\begin{itemize}
\item[(i)]  $\cC_n$ contains a character that takes more than  $r_n$ states, or
\item[(ii)]  $\cC_n$ fails to captures $T$,
\end{itemize}
is at most  $\nicefrac{1}{3}+\nicefrac{1}{3} =\nicefrac{2}{3}$, by Boole's  inequality.
Thus there is a strictly positive probability that $\cC_n$ satisfies neither of condition (i) and (ii), and so there must exist a set of
$ \lfloor n^\alpha \rfloor$ characters, each taking at most $ \lfloor n^\beta \rfloor$ states,  which captures $T$.
This completes the proof. 
\end{proof}

{\bf Remark:} Notice from the proof, that  the condition $\alpha+\beta > 1$ can be replaced by $\alpha+\beta=1$ if we allow 
$ k_n = \lfloor n^\alpha \rfloor$ characters to be replaced by $k_n =  \lfloor n^\alpha \rfloor (8+c)\log(n)$, for any $c>0$. 

\end{document}